\newtheorem{theorem}{Theorem}
\newtheorem{proposition}{Proposition}
\newtheorem{lemma}{Lemma}
\theoremstyle{definition}
\newcommand{\sfh}{\mathsf{h}}
\newcommand{\thetahat}{\widehat{\theta}}
\newcommand{\muhat}{\widehat{\mu}}
\newcommand{\Rbar}{\overline{R}}
\newcommand{\sfI}{\mathsf{I}}
\newcommand{\sfR}{\mathsf{R}}
\newcommand{\sfS}{\mathsf{S}}
\newcommand{\Prob}[1]{\mathbb{P} \left({#1}\right)}
\newcommand{\Expect}[1]{\mathbb{E} \left[{#1}\right]}
\newcommand{\matM}{\mathsf{M}}
\newcommand{\matQ}{\mathsf{Q}}
\newcommand{\matP}{\mathsf{P}}
\newcommand{\matI}{\mathsf{I}}
\newcommand{\matA}{\mathsf{A}}
\newcommand{\md}{\mbox{d}}
\newcommand{\Coffin}{\mathsf{C}}
\newtheorem{example}{Example}
\title{Direct statistical inference for finite Markov jump processes via the matrix exponential}
\author{Chris Sherlock$^1$\footnote{c.sherlock@lancaster.ac.uk}}
\date{{\small $^1$Department of Mathematics and Statistics, Lancaster University, UK}}
\begin{document}
\maketitle
\begin{abstract}
  Given noisy, partial observations of a time-homogeneous, finite-statespace Markov chain, conceptually simple, direct statistical inference is available, in theory, via its rate matrix, or infinitesimal generator, $\mathsf{Q}$, since $\exp (\mathsf{Q}t)$ is the transition matrix over time $t$. However, perhaps because of inadequate tools for matrix exponentiation in programming languages commonly used amongst statisticians or a belief that the necessary calculations are prohibitively expensive, statistical inference for continuous-time Markov chains with a large but finite state space is typically conducted via particle MCMC or other relatively complex inference schemes.
  
When, as in many applications $\mathsf{Q}$ arises from a reaction network, it is usually sparse. We describe variations on known algorithms which allow fast, robust and accurate evaluation of the product of a non-negative vector with the exponential of a large, sparse rate matrix. Our implementation uses relatively recently developed, efficient, linear algebra tools that take advantage of such sparsity.  We demonstrate the straightforward statistical application of the key algorithm on a model for the mixing of two alleles in a population and on the Susceptible-Infectious-Removed epidemic model.
\end{abstract}

\section{Introduction}
A \emph{reaction network} is a stochastic model for the joint evolution of one or more populations of \emph{species}. These species may be chemical or biological species \cite[e.g.][]{Wilkinson2012}, animal species \cite[e.g.][]{drovandi16}, interacting groups of individuals at various stages of a disease \cite[e.g.][]{AndBritt2000}, or counts of sub-populations of alleles \cite[e.g.][]{Moran1958}, for
example.
The state of the system is encapsulated by the number of each species that is
present, and the system evolves via a set of \emph{reactions}: Poisson processes whose rates depend on the current
state.

 Typically, partial and/or noisy observations of the state are available at a set of time points, and statistical interest lies in inference on the unknown rate parameters, the filtering estimate of the state of the system after the latest observation or prediction of the future evolution of the system. The usual method of choice for exact inference on discretely observed Markov jump processes (MJPs) on a finite or countably infinite state space is Bayesian inference via particle Markov chain Monte Carlo \cite[particle MCMC,][]{AndrieuDoucetHolenstein:2010} using a bootstrap particle filter \cite[e.g.][]{andrieu09,GolightlyWilkinson:2011,Wilkinson2012,mckinley2014,Owen:2015,koblents2015}. Other MCMC and SMC-based techniques are available \cite[e.g.][]{KYPRAIOS201742}, and, a further latent-variable-based MCMC method when the statespace is finite \cite[]{JMLR:v14:rao13a}.

Particle MCMC and SMC, however, are relatively complex algorithms, even more so when a bootstrap particle filter (simulation from the process itself) is not suitable and a bridge simulator is necessary, such as when observation noise is small or when there is considerable variability in the state from one observation to the next \cite[]{GW2015,GolightlySherlock2019}. 
In cases where the number of states, $d$, is finite, direct exact likelihood-based inference is available via the exponential of the infinitesimal generator for the continuous-time Markov chain, or rate matrix, $\matQ$. Whilst such inference is conceptually straightforward, it has been avoided in practice except in cases where the number of states is very small \cite[e.g.][]{AKTKJB19}. Matrix exponentiation has a computational cost of $\mathcal{O}(d^3)$, which, together with a lack of suitable tools in \texttt{R}, could explain the lack of uptake of this method. However, conceptually simple statistical inference via the matrix exponential is entirely practical in many cases even when the number of states is in the thousands or higher, for three main reasons:

\begin{enumerate}
\item Matrix exponentials themselves are never needed; only the \emph{product} of a vector and a matrix exponential is ever required.
\item The matrices to be exponentiated are infinitesimal generators and, as such, have a \emph{special structure}; furthermore, the vector that pre-multiplies the matrix exponential is non-negative.
\item The matrices to be exponentiated are \emph{sparse}; tools for basic operations with large, sparse matrices in \texttt{C++} and interfacing the resulting code with \texttt{R} have recently become widely available \cite[]{RcppArmadillo,ArmadilloSp}.
\end{enumerate}

The sparsity of $\matQ$ arises because the number of possible
‘next’ states given the current state is bounded by the number of reactions, which is typically
small.
This article describes matrix exponential algorithms suitable for statistical application in many cases, and demonstrates their use for inference, filtering and prediction. Associated code provides easy-to-use \texttt{R} interfaces to \texttt{C++} implementations of the algorithms, which are typically simpler and often faster than more generally applicable algorithms for matrix exponentiation.

Section \ref{sect.examples} describes the Susceptible-Infectious-Removed (SIR) model for the evolution of an infectious disease and the Moran model for the mixing of two alleles in a population, then briefly mentions many more such models where the statespace is finite, and a few where it is countably infinite. The two main examples will be used to benchmark and illustrate the techniques in this article.  As well as being directly of use for models with finite state spaces, exponentials of finite rate matrices can also be used to perform inference on Markov jump processes with a countably infinite statespace; see \cite{GHS2017} and \cite{SherlockGolightly2019}. The latter uses the uniformisation and scaling and squaring algorithms as described in this article, while the former uses the less efficient but more general algorithm of \cite{AMH11} (see Section \ref{sec.matexp}).

Section \ref{sec.data.and.likelihood} of this article presents the likelihood for discretely and partially observed data on a finite-statespace continuous-time Markov chain and presents two `tricks' specific to epidemic models, that allow for a massive reduction in the size of the generators that are needed compared with the size of the statespace. Section \ref{sec.matexp} describes the Matrix exponential algorithms and Section \ref{sec.numcomp} benchmarks some of the algorithms and demonstrates their use for inference, filtering and prediction. The article concludes in Section \ref{sec.discuss} with a discussion.

\subsection{Examples and motivation}
\label{sect.examples}
Both by way of motivation and because we shall use them later to illustrate our method, we now present two examples of continuous-time Markov processes, where a finite, sparse rate matrix contains all of the information about the dynamics.

For each Markov process, the set of possible states can be placed in one-to-one correspondance with a subset of the non-negative integers $\{1,\dots,d\}$. The off-diagonal elements of the rate matrix, $\matQ$, are all non-negative, and the $i$th diagonal element is $\matQ_{ii}=-\sum_{j=1,j\ne i}^d\matQ_{i,j}$. A chain that is currently in state $i$ leaves this state upon the first event of a Poisson process with a rate of $-\matQ_{i,i}$; the state to which it transitions is $j$ with a probability of $\matQ_{i,j}/(-\matQ_{i,i})$. Whilst the rate matrix, $\matQ$, is a natural description of the process, the likelihood for typical observation regimes involves the transition matrix, $\exp(\matQ t)$, the $(i,j)$th element of which is exactly
$\Prob{X_t=j|X_0=i}$.

\begin{example}
  \label{example.SIR} \textbf{The SIR model for epidemics}. The SIR model for a disease epidemic has $3$ species: those who are susceptible to the epidemic, $\sfS$, those both infected and infectious, $\sfI$, and those who have recovered from the epidemic and play no further part in the dynamics, $\sfR$. The non-negative counts of each species are denoted by $S$, $I$, and $R$. For relatively short epidemics the population, $n_{pop}$, is assumed to be fixed, and so the state of the Markov chain, represented by $(S,I)$, is subject to the additional constraint of $S+I\le n_{pop}$. The two  possible reactions and their associated rates are:
\[
\sfS+\sfI \stackrel{\beta SI}{\longrightarrow}2\sfI ,
~~~\mbox{and}~~~
    \sfI \stackrel{\gamma  I}{\longrightarrow}\sfR.
    \]
    
\end{example}

\begin{example}
  \label{example.Moran} \textbf{The Moran model for allele frequency} descibes the time evolution of the frequency of two alleles, $A_1$ and $A_2$ in a population with a fixed size of $n_{pop}$. Individuals with allele $A_1$ reproduce at a rate of $\alpha$, and those with $A_2$ reproduce at a rate of $\beta$. When an individual dies it is replaced by the offspring of a parent chosen uniformly at random from the whole population (including the individual that dies). The allele that the parent passes to the offspring usually matches its own, however as it is passed down an allele may mutate; allele $A_1$ switching to $A_2$ with a probability of $u$ and $A_2$ switching to $A_1$ with a probability of $v$. Let $\mathsf{A}_1$ and $\mathsf{A}_2$ represent individuals with alleles $A_1$ and $A_2$ respectively and let $N$ be the number of individuals with allele $A_1$. The two reactions are
  \[
  \mathsf{A}_1\stackrel{\lambda_N}{\longrightarrow}\mathsf{A}_2
~~~\mbox{and}~~~
    \mathsf{A}_2\stackrel{\mu_N}{\longrightarrow}\mathsf{A}_1.
    \]
    Setting $f_N=N/n_{pop}$, the corresponding infinitesimal rates are
    \[
    \lambda_N
    =
    \left(1-f_N\right)\left[\alpha f_N(1-u)+\beta(1-f_N)v\right]
~~~\mbox{and}~~~
    \mu_N
    =
    f_N\left[\beta (1-f_N)(1-v)+\alpha f_N u\right],
    \]
    where the unit of time is the expectation of the exponentially distributed time for an individual to die and be replaced.     \qed
\end{example}

The many other examples of interest include the SIS and SEIR models for epidemics \cite[e.g.][]{AndBritt2000}, dimerisation and the Michaelis-Menten reaction kinetics \cite[e.g.][]{Wilkinson2012}. Further examples but with an infinite statespace include the Schl\"ogel model \cite[e.g.][]{VellelaQian09}, the Lotka-Volterra predator-prey model \cite[e.g.][]{Wilkinson2012,drovandi16} and models for the autoregulation of the production of a protein \cite[e.g.][]{Wilkinson2012}, all of which are tackled using matrix exponentials in \cite{SherlockGolightly2019}.

\section{Data and likelihood calculations}
\label{sec.data.and.likelihood}
Denote the statespace of the Markov chain $\{X_t\}_{t\ge 0}$ by $\mathcal{X}=\{x^{(k)}\}_{k=1}^{d}$. Let the prior mass function across states be $\nu(x|\theta)$, the infinitesimal generator be $\matQ(\theta)$, and suppose there are observations $y_0,y_1,\dots,y_n$ at times $t_0,t_1,\dots,t_n$, where $Y_i|(X_i=x_i)$ has a mass function of $p(y_i|x_i,\theta)$, $i=0,\dots,n$.

\subsection{Likelihood for noisy and partially observed data}

For any continuous-time Markov chain $\{X_t\}_{t\ge 0}$ with an infinitesimal generator, or rate matrix of $\matQ$, the $(x,x')th$ element of $\exp(\matQ t)$ gives the transition probability \cite[e.g.][]{Norris1997}:
\[
\Prob{X_t=x'\mid X_0=x}=\left[\exp(\matQ t)\right]_{x,x'},
\]
where here and elsewhere we abuse notation by identifying the state $x^{(i)}\in\mathcal{X}$ with the corresponding index $i\in\{1,\dots,d\}$.
  
Defining  the diagonal likelihood matrix to be 
$L_j(\theta)=\mathsf{diag}(p(y_j|x^{(1)},\theta),\dots,p(y_j|x^{(d)},\theta))$  
 and $\Delta_j=t_j-t_{j-1}$, $j=1,\dots,n$, the likelihood for the observations is then
\begin{align}
  \nonumber
  \Prob{y_0,\dots,y_n\mid \theta}
  &=
  \sum_{(x_0,\dots,x_n)\in\mathcal{X}^{n+1}}
  \Prob{X_0=x_0}\Prob{Y_0=y_0|X_{0}=x_0}\\
\nonumber
&~~~~~~~~~~~~~~~~
\prod_{j=1}^n\Prob{X_j=x_j| X_{j-1}=x_{j-1}}\Prob{Y_j=y_j|X_{j}=x_j}\\
  &=  \label{eqn.likelihood}
\nu(\theta)^\top L_0(\theta)\left[\prod_{j=1}^n \exp(\matQ(\theta) \Delta_j)L_j(\theta)\right]\underline{1},
\end{align}
where $\underline{1}$ is the $d$-vector of ones. Similarly, the filtering distribution after observation $y_m$ is
\begin{align}
  \label{eqn.filter}
  \Prob{X_{t_m}=x\mid y_0,\dots,y_m}
  &=
  \frac{
    \nu(\theta)^\top L_0(\theta)\left[\prod_{j=1}^m \exp(\matQ(\theta) \Delta_j)L_j(\theta)\right]}
  {
    \nu(\theta)^\top L_0(\theta)\left[\prod_{j=1}^m \exp(\matQ(\theta) \Delta_j)L_j(\theta)\right]\underline{1}}.
\end{align}
Consider the required multiplication from left to right: since the likelihood vectors $L_j(\theta)$ are diagonal, pre-multiplication by a $d$-vector is an $\mathcal{O}(d)$ operation. Pre-multiplcation of the exponential of a sparse matrix by a $d$-vector via the uniformisation algorithm is also $\mathcal{O}(d)$ (see Section \ref{sec.uniformisation}), so the entire likelihood calculation is $\mathcal{O}(d)$. In the case of certain epidemic models $d$ itself can be much smaller than might naively be assumed.

\subsection{Statespace reduction for epidemic models}
\label{sec.statespace.red}
\cite{HCS2018} points out that the evolution of the Markov chain in an SIR model from one observation time to the next can be described entirely by the number of new infections and the number of new removals, $B_I$ and $B_R$, neither of which can be negative. Consider the case of exact observations and suppose, for example, that in a population of size $n_{pop}=500$, $x_a=(S_a,I_a,R_a)=(485,2,13)$ and for some $t>0$, $x_{a+t}=(470,3,27)$. Then $b_R=R_{a+t}-R_a=14$
and $b_I=S_{a}-S_{a+t}=15$. The size of the statespace for evolution between time $a$ and time $a+t$, $\mathcal{X}_{a}^{a+t}$, is then reduced from the size of the full statespace, $(n_{pop}+1)(n_{pop}+2)/2=125751$ to $(b_I+1)(b_R+1)=240$.

In \cite{HCS2018}, a recursive formula for the Laplace transform of the transition probability to a given new state in terms of transition probabilities for old states then permits estimation of the transition vector from a known initial starting point in $\mathcal{O}(d)$ operations, where $d$ is the dimension of the statespace actually required.

We may use the same statespace formulation as \cite{HCS2018}, provided we include an additional coffin state, $\Coffin$, with $Q_{\Coffin,x}=0$ for all $x\in \mathcal{X}_a^{a+t}\cup \Coffin$. Any births that would leave the statespace (and hence contradict the observation at time $a+t$) instead go to $\Coffin$.
We also provide a further reduction in the size of the statespace, by a factor of up to one half. The current number of infections can never be negative, so, throughout the time interval $[a,a+t]$, $b_R\le I_a+b_I$. In the example above, this reduces the statespace size still further, from $240$ to $162$.

Although we do not examine it here, a similar reduction of the statespace but for the SEIR model, to $(b_E,b_I,b_R)$, is described in \cite{HCS2018}. A further reduction, by  a factor of up to $6$, is possible by observing that neither $E_{a+t}=E_a+b_E-b_I$ nor $I_{a+t}=I_a+b_{I}-b_R$ can be negative.

In epidemics, typically only removals are observed. Thus, to take advantage of the reduced statespace formulation, latent variables representing the number infected at each observation time must be introduced. This makes Bayesian inference via MCMC feasible for relatively large populations. For example, \cite{HCS2018} performs inference for the SIR model using data from the Ebola outbreak in regions of Guinea, and in Section \ref{sec.measles} we perform inference on data from the 2013 Measles outbreak in Swansea, Wales.

\section{Matrix exponentiation}
\label{sec.matexp}
The exponential of a $d\times d$ square matrix, $\matM$ is defined via its infinite series: $e^\matM=\sum_{i=0}^\infty \frac{1}{i!}\matM^i$.
As might be anticipated from the definition, for a $d\times d$ matrix, algorithms for evaluating $\exp {(\matM)}$ take $\mathcal{O}(d^3)$ operations \cite[see][for a review of many such methods]{Dubious}. However, for a $d$-vector, $v$, the product $\exp{(Mt)}~v$ is the solution to the initial value problem $w(0)=v$, $dw/dt=\matM w$, and is the key component of the solution to more complex differential equations such as $dw/dt=\matM w +B u(t)$. For this reason the numerical evaluation of the action of a matrix exponential on a vector has received considerable attention of itself \cite[e.g.][]{GS92,Saad92,Sidje98,AMH11}.

When $\matM$ is dense, 
\begin{equation}
  \label{eqn.matexponvec}
  \exp(\matM)~v=\sum_{i=0}^\infty \frac{1}{i!}\matM^iv
\end{equation}
can be evaluated in $\mathcal{O}(d^2)$ operations if the series is truncated at an appropriate point. However, motivated by the examples in Section \ref{sect.examples} our interest lies in large \emph{sparse} matrices, and the number of operations can then be reduced to $\mathcal{O}(rd)$, where $r$ is the average number of entries in each row of $\matM$.

With double-precision arithmetic, real numbers are stored to an accuracy of approximately $10^{-16}$. Thus, evaluation of the exponential of a large negative number via its Taylor series is prone to potentially enormous round-off errors due to the almost cancellation of successive large positive and negative terms; a similar problem can affect the exponentiation of a matrix. Such issues are typically circumvented via the identity
\begin{align}
  \label{eqn.use.K}
\exp(\matM)v &= \left[\prod_{k=1}^K \exp(\matM/K)\right] v,
\end{align}
applied for a sufficiently large integer $K$, and evaluated via $K$ successive evaluations of product of $\exp(\matM/k)$ and a vector.
The calculation on the right of \eqref{eqn.use.K} typically involves many more numerical operations than the direct calculation on the right of \eqref{eqn.matexponvec}, so $K$ should be the \emph{smallest} integer that leads to the required precision by mitigating sufficiently against the cancellation of large positive and negative terms. This minimises both the accumulation of rounding errors and the total compute time given the required accuracy.

One common technique for such multiplication, exemplified in the popular \texttt{Expokit} \texttt{FORTRAN} routines \cite[]{Sidje98}, estimates $e^{\matM/K}v$
via its projection on to the Krylov subspace of  
$\mathsf{Span}\{v, \matM v,\dots,\matM^{n-1}v\}$,
where $n<<d$. A second method is provided in \cite{AMH11}, where the key contributions lie in the method for choosing $K$ and for choosing a suitable truncation point for the infinite series, as well as a means of truncating each series early depending on the behaviour of recent terms.

Both of the above algorithms use the fact that $\matM$ is sparse and that only the action of $\exp(\matM)$ on a vector is required, but neither uses the special structure of the problem of interest to us: we require
$\nu^\top\exp(\matQ t)$ where $\matQ$ is a rate matrix and $\nu$ is a non-negative vector. Since $\matQ t$ is also a rate matrix, we henceforth set $t=1$ without loss of generality.
Let
\begin{equation}
\label{eqn.define.rho.P}
  \rho:=\max_{i=1,\dots,d} |\matQ_{ii}|
  ~~~\mbox{and}~~~
  \matP=(1/\rho)\matQ+I.
\end{equation}
$\matP$ is a Markov transition matrix, and  the key observation is that 
\begin{equation}
  \label{eqn.key.reln.P.Q}
\exp \matQ=\exp (\rho \matP-\rho \matI)
=\exp(-\rho)\exp(\rho \matP)
=\sum_{i=0}^\infty\exp(-\rho)\frac{\rho^i}{i!}\matP^i.
\end{equation}
Firstly, $\matP$ has no negative entries so cancellation of terms with alternating signs is no longer a concern. Secondly, $\exp \matQ$ can be interpreted as a mixture over a $\mbox{Poisson}(\rho)$ random variable $I$, of $I$ transitions of the discrete-time Markov chain with a transition matrix of $\matP$. 

The next two subsections detail variations on two existing algorithms that utilise this special structure: the \emph{uniformisation} algorithm and a variation on the \emph{scaling and squaring} algorithm. For sparse rate matrices, the uniformisation algorithm has a cost of $\mathcal{O}(\rho d)$, whereas the scaling and squaring algorithm has a cost of
$\mathcal{O}(d^3\log \rho)$. Thus, the uniformisation algorithm is preferred when $\rho$ is small, and scaling and squaring when $\rho$ is large but $d$ is relatively small. We now describe the two algorithms in detail.

\subsection{The uniformisation algorithm}
\label{sec.uniformisation}

In many statistical applications, the most appropriate algorithm  for calculating $\mu^\top:=\nu^\top \exp \matQ $ is the uniformisation algorithm \cite[e.g.][]{ReibmanTrivedi1988,SidjeStewart1998}.
This estimates $\mu^\top$ by truncating a single series none of whose terms can be negative, rather than truncating multiple series where terms may change sign as in \cite{AMH11}. Given an $\epsilon>0$, the algorithm calculates an approximation, $\muhat$, to $\mu$ by picking a truncation point for the infinite series, such that, if $\nu$ were a probability vector, the (guaranteed to be non-negative) amount of \emph{true} missing probability over all of the $d$ dimensions is controlled:
\[
0< 1-\frac{||\muhat^*||_1}{||\nu||_1} < \epsilon,
\]
where $\muhat^*$ is the probability vector that would be calculated if there were no rounding errors, and the only errors were due to the truncation of the infinite series.
Typically we aim for $\epsilon$ to be similar to the machine's precision. We control the absolute truncation error and note that with any truncation of the power series, it is impossible to obtain general control of the \emph{relative error in a given component} of $\mu$, $|\muhat_i/\mu_i-1|$. Consider, for example, a Moran process (Example \ref{example.Moran}), where $\matQ$ is tridiagonal. Then $\matQ^k$ is also banded, with a band width of $2k+1$. For any given $m_{max}$, and $\nu=(1,0,0,\dots)$, set $d>m_{max}+1$. The truncated approximation to $e^{\matQ}$ gives a transition probability of $0$ for all states above $m_{max}+1$, yet, in truth there is a non-zero probability of such a transition.

From \eqref{eqn.key.reln.P.Q},
\[
\mu^\top=\nu^\top e^Q = e^{-\rho}\nu^\top \sum_{i=0}^\infty \frac{\rho^i}{i!}P^i
\approx
e^{-\rho}
\sum_{i=0}^{m} \frac{\rho^i}{i!}\nu^\top P^i=:\muhat^{*\top}.
\]
Now,
\[
\sum_{i=1}^d \muhat^*_i = \muhat^{*\top}1  = e^{-\rho}\sum_{i=0}^m \frac{\rho^i}{i!}\nu^\top P^i 1
=
||\nu||_1 e^{-\rho}\sum_{i=0}^m \frac{\rho^i}{i!}.
\]
So the absolute relative error, or (when $\nu$ is a probability vector) missing probability mass, due to truncation is
\[
r_m(\rho):=e^{-\rho}\sum_{i=m+1}^\infty \frac{\rho^i}{i!},
\]
the tail probability of a $\mathsf{Poisson}(\rho)$ random variable.
Of direct interest to us is 
  \[m_{\epsilon}(\rho)
  :=
  \inf\{m\in \mathbb{N}: r_m(\rho)\le \epsilon\},
  \]
  the smallest $m$ required to achieve an error of at most $\epsilon$, or, essentially, the quantile function for a $\mathsf{Poisson}(\rho)$ random variable, evaluated at $1-\epsilon$. Chebyshev's inequality applied to $X/\rho$, where $X\sim \mathsf{Poisson}(\rho)$ gives $\Prob{|X/\rho - 1|\ge 1/\sqrt{\epsilon \rho}}\le \epsilon$, implying the $m=\mathcal{O}(\rho)$ computational cost given earlier in this section.
  
  In many programming languages, standard functions are available to evaluate $m_{\epsilon}(\rho)$. However, for example, in \texttt{R} we find
\begin{verbatim}
> rho=100; eps=1e-16
> qpois(eps,rho,lower.tail=FALSE)
[1] Inf
> ppois(193,rho,lower.tail=FALSE) # 193 is correct answer, not infinity
[1] 5.713551e-17
> eps=1e-15
> qpois(eps,rho,lower.tail=FALSE)
[1] 185
> ppois(185,rho,lower.tail=FALSE)
[1] 1.035777e-14
> ppois(189,rho,lower.tail=FALSE) # 189 is correct answer, not 185
[1] 8.017165e-16  
\end{verbatim}
\emph{i.e.}, an inability to calculate $m_{\epsilon}(\rho)$ correctly given the small $\epsilon$ values that we require; the underlying functions are also callable from \texttt
{C++} and lead to the same error. In Appendix \ref{sec.mepsrho} we provide sharp bounds on $m_{\epsilon}(\rho)$, and this leads to an accurate methodology for its exact calculation, producing the same (correct) answers as the \texttt{C++} \texttt{boost} library (which we have not been able to use with \texttt{RCpp}) and up to twice as quickly.

The uniformisation algorithm is presented as Algorithm \ref{alg.unif}.
For large values of $\rho$, although there is no problem with large positive and negative terms cancelling, it is possible that the partial sum
$\sum_{i=0}^k \frac{\rho^i}{i!}$ might exceed the largest floating point number storable on the machine. We circumvent this problem by occasionally renormalising the vector partial sum when the most recent contribution is large, and compensating for this at the end; see lines \ref{renormaa}, \ref{renormb} and \ref{renormc}.

\begin{algorithm}
  \label{alg.unif}
  \caption{Uniformisation algorithm for $\nu^\top e^{\matQ}$ with a missing mass of at most $\epsilon$.}
  \begin{algorithmic}[1]
    \State $\rho\gets \max_{i=1}^d |Q_{i,i}|$;
    $\matM\gets \matQ+\rho \matI_d$; $BIG\gets 10^{100}$.
    \State Find $m_{\epsilon}(\rho)$.
    \State $b\gets||\nu||_1$; $c \gets 0$. 
    \If{$b>BIG$} 
    \State  $\nu\gets \nu/b$; $c\gets c+\log b$; $b\gets 1$. \label{renormaa}
    \EndIf
    \State $v_{pro}\gets v_{sum}\gets \nu$.
    \State $f \gets 1$.
    
       \For{$j$ from $1$ to $m$}
       \State $v_{pro}^{\top}\gets v_{pro}^\top \matM/f$; $b\gets b\rho/f$.
       \State $v_{sum}\gets v_{sum}+v_{pro}$. \label{accum}
       \If{$b>BIG$}
       \State $v_{pro}\gets v_{pro}/b$; $v_{sum}\gets v_{sum}/b$; $c\gets c+\log b$; $b\gets 1$. \label{renormb}
       \EndIf
       \State $f\gets f+1$.
       \EndFor
       \State \textbf{return} $e^{c-\rho}\times v_{sum}$. \label{renormc}
  \end{algorithmic}
\end{algorithm}

\subsection{Scaling and squaring}
One of the simplest, yet most robust methods for exponentiating any square matrix is the scaling and squaring algorithm \cite[e.g.][]{Dubious}. When the square matrix is an infinitesimal generator, this method can be made even more robust using the reformulation in \eqref{eqn.key.reln.P.Q}. Furthermore, when not $\exp \matQ$ but $\nu^\top\exp \matQ$ is required, some further computational savings can be obtained.

The basic scaling and squaring algorithm takes advantage of the identity
\[
\exp(\matM)=\left[\exp(\matM/2^s)\right]^{2^s},
\]
where for any integer $s$, a square matrix is raised to the power of $2^s$ by squaring it $s$ times. We set $\matM=\matQ+\rho\matI=\rho\matP$ from \eqref{eqn.define.rho.P}. And define $\matM_{small}=\matM/2^s$. First, $\exp(\matM_{small})$ is approximated via the uniformisation algorithm applied to a matrix \cite[e.g.][]{Ross1996}: $\sum_{i=0}^m \matM_{small}^i/i!$. This quantity is then squared $s$ times. A linear search between tight upper and lower bounds gives the $\widehat{s}$ which minimises the total number of matrix multiplications, $m_\epsilon(\rho/2^s)+s$; we set $s\leftarrow \widehat{s}-\min(2,\widehat{s})$ as a compromise for the fact that when $\matQ$ is sparse, the individual matrix multiplications used to evaluate $\exp(\matM_{small})$ are cheaper than those used to square it, since $\exp(\matM_{small})$ is dense.

When evaluating $\nu^\top \exp(\matQ)=\exp(-\rho)\nu^\top \exp(\matM)$ via scaling and squaring with $s>0$ it is never most efficient to first evaluate $\exp(\matM)$. Let $s_1$ and $s_2$ be two integers such that $s_1+s_2=s$. Then
\[
\nu^\top \exp(\matM)
=
\nu^\top [\exp(\matM_{small})]^{2^{s_1}}[\exp(\matM_{small})]^{2^{s_1}}\dots [\exp(\matM_{small})]^{2^{s_1}},
\]
with $2^{s_2}$ matrix vector products. The cost of $s_1$ matrix squares and $2^{s_2}$ vector-matrix products (where the matrix is dense) is $s_1d^3+2^{s_2}d^2$. We round the minimiser down to the nearest integer for simplicity, setting
\begin{equation}
  \label{eqn.ssub}
s_2=\min\left(s,\lfloor (\log d-\log\log 2) /\log 2\rfloor\right)
\end{equation}
Even with $d=2$ this gives $s_2=\min(s,1)$.

\begin{algorithm}
  \label{alg.SS}
  \caption{Scaling and squaring algorithm for $\nu^\top e^{\matQ}$ with a missing mass of at most $\epsilon$.}
  \begin{algorithmic}[1]
    \State $\rho\gets \max_{i=1}^d |Q_{i,i}|$.
    \State Find $s$ via linear search; $\rho_{small}\gets \rho/2^s$; find $m_{\epsilon}(\rho_{small})$; find $(s_1,s_2)$ via \eqref{eqn.ssub}.
    \State $\matM_{small}\gets (\matQ+\rho \matI)/2^s$.
    \State $\nu_{pro}\gets \nu$.
    \State $\matA_{pro}\gets  \matM_{small}$; $\matA_{sum} \gets \matI+\matM_{small}$
    \State $f \gets 2$.
    
       \For{$j$ from $2$ to $m$}
       \State $\matA_{pro}\gets \matA_{pro} \matM_{small}/f$.
       \State $\matA_{sum}\gets \matA_{sum}+\matA_{pro}$. 
       \State $f\gets f+1$.
       \EndFor
       \State $\matA_{sum}\gets e^{-\rho_{small}} \matA_{sum}$
       \For{$j$ from $1$ to $s_1$}
       \State $\matA_{sum}\gets \matA_{sum}\times \matA_{sum}$.
       \EndFor
       \For{$j$ from $1$ to $2^{s_2}$}
       \State $\nu_{pro}^\top\gets \nu_{pro}^\top \matA_{sum}$.
       \EndFor
       \State \textbf{return} $\nu_{pro}^\top$. 
  \end{algorithmic}
\end{algorithm}

\subsection{Improvements}
\label{sect.options}
We now describe two optional extensions: renormalisation, which improves the accuracy of any matrix exponentiation algorithm used on a rate matrix, and two-tailed truncation, which is unique to the uniformisation algorithm and allows a small computational saving.

Since $a:=\sum_{i=1}^d \mu_{i}=\sum_{i=1}^d \nu_{i}$ there is no need to keep track of the logarithmic offset ($c$ in Algorithm \ref{alg.unif}). Instead the final vector ($v_{sum}$ in Algorithm \ref{alg.unif}) is renormalised at the end so that its components sum to $a$. 

Two-tailed truncation \cite[e.g.][]{ReibmanTrivedi1988} permits a small reduction in the computational cost of the uniformisation algorithm with no loss of accuracy. When $\rho$ is moderate or large, the total mass of probability from the initial value of $v_{sum}$ and the early values accumulated into $v_{sum}$ (Steps 6 and 10 of Algorithm \ref{alg.unif}) is negligible (has a relative value smaller than $\epsilon/2$, say) compared with the sum of the later values. In such cases $v_{sum}$ may be initialised to $0$ and step \ref{accum} omitted for values of $j$ beneath some $m_{lo}$. Proposition \ref{prop.lopsided} below shows that if $m$ is chosen such that $\Prob{\mathsf{Po}(\rho)>m}\le \epsilon/2$ then setting
$m_{lo}:=\max(0,2\lfloor \rho-0.5\rfloor-m)$ ensures that the missing probability mass is no more than $\epsilon$. For large $\rho$,
$m-m_{lo}=\mathcal{O}(\sqrt{\rho})$, so with two-tailed truncation the cumulative cost of Step \ref{accum} dwindles compared with the other $\mathcal{O}(d)$ costs, which are repeated $\mathcal{O}(\rho)$ times.

  \begin{proposition}
    \label{prop.lopsided} Given $\rho>0$, let $p_n=e^{-\rho}\rho^n/n!=\Prob{\mathsf{Poisson(\rho)}=n}$, and let $c=\lfloor \rho-1/2\rfloor$. Then for $a\le c-1$,
    \[
\sum_{j=0}^{c-a-1}p_j < \sum_{j=c+a+1}^\infty p_j.
    \]
\end{proposition}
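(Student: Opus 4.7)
The plan is to prove the inequality by a term-by-term pairing. For each $k = 1, 2, \ldots, c-a$, I would pair the term $p_{c-a-k}$ from the lower sum with $p_{c+a+k}$ from the upper sum. This rewrites the lower sum as $\sum_{k=1}^{c-a} p_{c-a-k}$, and the upper sum dominates $\sum_{k=1}^{c-a} p_{c+a+k}$ (its tail beyond $k = c-a$ contributes extra non-negative mass). The whole claim then reduces to the pointwise comparison $p_{c-a-k} < p_{c+a+k}$ for every relevant $k$.

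The key step is to bound the ratio
\[
\frac{p_{c+a+k}}{p_{c-a-k}} \;=\; \frac{\rho^{2a+2k}}{\prod_{i=c-a-k+1}^{c+a+k} i}.
\]
The denominator is a product of $2a+2k$ consecutive positive integers whose arithmetic mean is exactly $(2c+1)/2 = c + \tfrac{1}{2}$. By the AM--GM inequality the product is bounded above by $(c + \tfrac{1}{2})^{2a+2k}$, and since $c = \lfloor \rho - \tfrac{1}{2}\rfloor \le \rho - \tfrac{1}{2}$, this in turn is at most $\rho^{2a+2k}$. Hence the ratio is at least $1$, i.e.\ $p_{c+a+k} \ge p_{c-a-k}$. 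Summing over $k = 1, \ldots, c-a$ gives the non-strict form of the claim.

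The delicate point will be getting strict inequality, especially in the boundary case when $\rho$ is an exact half-integer (so $c + \tfrac{1}{2} = \rho$ leaves no slack in the second step of the bound). I would resolve this at the AM--GM step itself: for every $k \ge 1$ the product contains $2a + 2k \ge 2$ \emph{distinct} consecutive integers, so AM--GM is automatically strict, giving $p_{c+a+k} > p_{c-a-k}$ for each such $k$. Summing then yields the required strict inequality. I expect no other real obstacle, since the hypothesis $a \le c-1$ precisely guarantees that the lower index $c - a - k$ remains non-negative as $k$ runs up to $c - a$, so every paired term is well-defined.
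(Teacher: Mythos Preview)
Your proof is correct and follows essentially the same route as the paper: both pair $p_{c-i}$ with $p_{c+i}$ symmetrically about $c$ and reduce the claim to showing $\prod_{j=c-i+1}^{c+i} j < (c+\tfrac12)^{2i} \le \rho^{2i}$. The only difference is in how that product bound is obtained --- the paper factors the product pairwise as $\prod_{j=0}^{i-1}\bigl[(c+\tfrac12)^2-(j+\tfrac12)^2\bigr]$ and observes each factor is strictly below $(c+\tfrac12)^2$, whereas you invoke AM--GM on the $2i$ distinct consecutive integers; both yield the required strict inequality with equal ease.
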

  \begin{proof}
    For any integer $b$, and $1\le i \le b$,
    \[
    \frac{p_{b-i}}{p_{b+i}}=\rho^{-2i}b(b+1)(b-1)(b+2)\dots (b-i+1)(b+i)
    =
    \rho^{-2i}\left[b_*^2-\frac{1}{2^2}\right]
    \cdots \left[b_*^2-\frac{(2i-1)^2}{2^2}\right] 
    \]
    where $b_*=b+1/2$. Hence, if $b_*\le \rho$, $p_{b-1}/p_{b+i}<1$, and so
    \[
    \sum_{j=0}^{\lfloor b^*\rfloor -a-1} p_i=
\sum_{i=a+1}^{\lfloor b_*\rfloor} p_{b-i}< \sum_{i=a+1}^{\lfloor b_*\rfloor} p_{b+i}<\sum_{i=a+1}^{\infty}p_{b+i}.
\]
  \end{proof}

  \subsection{Implementation}
Our \texttt{C++} implementation uses the recent basic sparse matrix functionality in the \texttt{C++} \texttt{Armadillo} library \cite[]{Armadillo,ArmadilloSp} to calculate $\nu^\top \exp \matQ$, where $\nu$ is non-negative and $\matQ$ is a large, sparse rate matrix. Direct function calls from the \texttt{R} programming language are enabled through \texttt{RcppArmadillo}  \cite[]{RcppArmadillo}. For completeness, the functions can also be called with dense rate matrices. The functions are collected into the \texttt{expQ} package which is downloadable from \url{https://github.com/ChrisGSherlock/expQ} and are briefly outlined in Appendix \ref{sec.expQ}. 

The speed of a vector multiplication by a sparse-matrix depends on the implementation of the sparse matrix algorithm. In \texttt{R} \cite[]{Rcite} and in \texttt{C++} \texttt{Armadillo}, sparse matrices are stored in column-major order. Hence pre-multiplication of the sparse matrix by a vector, $\nu^\top Q$, is much quicker than post multiplcation, $Q\nu$. In other languages, such as \texttt{Matlab}, sparse matrices are stored in row-major order and post-multiplication is the quicker operation, so $Q^\top$ should be stored and used, rather than $Q$.

\section{Numerical comparisons and demonstrations}
\label{sec.numcomp}
In \cite{AMH11} their new algorithm (henceforth referred to as AMH) is compared across many examples against state-of-the-art competitors, including, in particular, the \texttt{expokit} function \texttt{expv} \cite[]{Sidje98}. In most of the experiments AMH is found to give comparable or superior accuracy together with superior computational speed.
Given these existing comparisons and that the superiority of the uniformisation algorithm over the algorithm of \cite{AMH11} (for rate matrices) is not the main thrust of this paper, we perform a short comparison of accuracy and speed for two different likelihood calculations for an SIR model fitted to data from the Eyam plague. We compare our implementation of the uniformisation algorithm, the algorithm of AMH, the \texttt{expAtv} function which is from the \texttt{R} package \texttt{expm} and uses the method of \cite{Sidje98}, and the bespoke algorithm for epidemic processes in \cite{HCS2018}. Since it would be unfair to compare the clock-speeds for the \texttt{Matlab} code for AMH  directly with those of our \texttt{RCppArmadillo} implementation, we compare the number of sparse vector-matrix multiplications that are required.

The highest accuracy available in \texttt{C++} using sparse matrices and the \texttt{armadillo} linear algebra library is double precision, which we used throughout in our implementation of both of our algorithms. For the uniformisation and scaling and squaring algorithms we used $\epsilon=10^{-15}$, and for AMH we used the double-precision option. For \texttt{expAtv} and for \cite{HCS2018} we use the default package setting.

\subsection{Comparison with other matrix exponentiation algorithms}
\label{sec.cmp.HCSexpv}
To examine the speed and accuracy of the algorithm 
we consider the collection (see the first three rows of Table \ref{table.Eyam}) of $(S,I)$ (susceptible and infected) values, which originated in \cite{Raggett82} and were used in \cite{HCS2018}, for the Eyam plague. We set the parameters to their maximum-likelihood estimates, $(\beta,\gamma)=(0.0196,3.204)$ and consider the likelihood for the data in Table \ref{table.Eyam}. In addition, 
to mimic the size of potential changes between observation times and the size of the elements of the rate matrix from a larger population, we also evaluated the likelihood for the jump directly from the data at time $0$ to the data at time $4$. The final three rows of Table \ref{table.Eyam} refer to the rate matrix for the transition between consecutive observations and provide the dimension the matrix first using the reformulation of \cite{HCS2018} and then applying the improvement described in Section \ref{sec.statespace.red}; the final row is the absolute value of the largest entry of $\matQ$, $\rho$. The rate matrix for the single jump between times $0$ and $4$ had $d_{HCS}=30789$, $d=16082$ and $\rho\approx 3439.5$. The full statespace has a size of $34453$. Thus, for large changes, the main reduction in size arises from the improvement in Section \ref{sec.statespace.red}, but for small jumps this provides a smaller relative reduction compared with that in \cite{HCS2018}.

\begin{table}
\begin{center}
  \caption{\small{Time (in units of 31 days), and numbers of susceptibles and infecteds, originally from \cite{Raggett82}. The final rows indicates, for each pair of consecutive observations, the size of the statespace for evaluating the transition probability and the $\rho$ value for the associated rate matrix.} \label{table.Eyam}}
\begin{tabular}{c|rrrrrrrr}
  Time&0&0.5&1.0&1.5&2.0&2.5&3.0&4.0\\
  \hline
  S   &254&235&201&153&121&110&97&83\\
  I   &7&14&22&29&20&8&8&0\\
  \hline
  $d_{HCS}$ &-&261&946&2059&1387&289&197&346\\
  $d$ &-&245&867&1868&1308&282&181&240\\
  $\rho$ &-&101.5&171.4&217.1&170.1&83.1&53.6&106.3\\
\end{tabular}
\end{center}
\end{table}
For the uniformisation and scaling and squaring algorithm, with $\epsilon=10^{-15}$, the algorithm of \cite{HCS2018} and the \texttt{expAtv} function from the \texttt{R} package \texttt{expm} package \cite[]{Sidje98} we found the CPU time for $1000$ estimations of the likelihood ($20$ estimates for the likelihood for the jump from $t=0$ to $t=4$). We also recorded the  error in the evaluation of the log likelihood. Since for uniformisation, using renormalisation and two-tailed truncation together produced the fastest and most accurate evaluations, we only considered this combination. Given that the true likelihood is not known, the error using uniformisation, from scaling and squaring and from \cite{AMH11} were approximately bounded by examining their discrepancy from each other.
The results are presented in Table \ref{table.Eyam.res}.

\begin{table}
    \begin{center}
  \caption{\small{Timings for estimating the full log-likelihood ($1000$ repeats) and the log-likelihood for the jump from the initial to the final observation ($20$ repeats) for the Eyam data set, number of sparse vector-matrix multiplications for one repeat,  and the accuracies of the estimates. Results are given for the method of \cite{HCS2018} (HCS), the \texttt{expAtv} function in the \texttt{expm} package, which uses the Krylov subspace techniques of \cite{Sidje98}, the method of \cite{AMH11} (AMH), the uniformisation algorithm (Unif) and the scaling and squaring algorithm (SS). $^1$ The timing for SS on the jump likelihood was estimated from a single repeat. \label{table.Eyam.res}}}
  
  \begin{tabular}{c|ccc|ccc}
    &\multicolumn{3}{c|}{Full likelihood}&\multicolumn{3}{c}{Jump likelihood}\\
  Algorithm&Time (secs)&Mult&Accuracy & Time (secs) &Mult& Accuracy\\
  \hline
  HCS & 45.3 &-& $5.7\times 10^{-8}$ & 9.7 &-&  $4.3\times 10^{-9}$\\
  expAtv & 558.5 &-& $1.6\times 10^{-10}$& 323.2 &-& $8.2\times 10^{-11}$\\
  AMH &-& 3701 &$<1\times 10^{-15}$&-& 14300 & $<4\times 10^{-14}$\\
  Unif  & 18.72 &1596& $<1\times 10^{-15}$ &15.2 &3921&$<6\times 10^{-14}$\\
  SS  &1678 &-& $1.1\times 10^{-13}$ &8940$^1$&-&$<6\times 10^{-14}$\\
  \end{tabular}
    \end{center}
\end{table}

Scaling and squaring is extremely slow in these high-dimensional scenarios; however, \cite{SherlockGolightly2019} provides a bistable example, the Schl\"ogel model, where $d\approx 100-200$ but $\rho>10^5$, and the scaling and squaring algorithm outperforms uniformisation by orders of magnitude.

Since $m=\mathcal{O}(\rho)$ the choice of tolerance, $\epsilon$, typically has only a small effect on the speed of the uniformisation algorithm. For the full likelihood evaluation, uniformisation is over twice as fast as the algorithm of \cite{HCS2018} and approximately thirty times as fast as \texttt{expAtv}, and is more accurate than either; it is also over twice as fast as the algorithm of \cite{AMH11}, although both are very accurate.

For the single large jump between observations, we see the same pattern in terms of accuracy. There is a gain in efficiency by using two-tailed-truncation because $\rho$ is larger ($m_{lo}=3081$ and $m=3797$), but despite this, the method of \cite{HCS2018} is now more efficient than uniformisation, although considerably less accurate than it. Again, \texttt{expAtv} is over twenty times slower than uniformisation and less accurate, and AMH is over three times slower than uniformisation.

\subsection{Maximum likelihood inference, filtering and prediction}
\label{sec.Moran}
We now consider the Moran model, which has four unknown parameters: $(\alpha,\beta,u,v)$ and $n_{pop}=1000$. Setting $(\alpha,\beta,u,v)=(1,0.3,0.2,0.1)$, we simulate a path of the process for $T=10000$ time units. We then sample $51$ observations at times $0,200,400,\dots,10000$, by taking the value of the process at each of these times and adding independent noise with a distribution of $\mathsf{Bin}(800,0.5)-400$.

We then perform inference on $\theta=(\log \alpha, \log \beta, \log [u/(1-u)], \log [v/(1-v)])$ by maximising the likelihood based on all the data and, separately, based on the data up to $T=5000$. In each of these two data scenarios we find the filtering distribution, $\Prob{X_T|y_{0:T},\thetahat}$, at time $T$ via \eqref{eqn.filter}; finally we predict forward from $T$ in steps of $200$ for a further time of $T_{pred}=5000$  by repeatedly multiplying the current distribution vector by $\exp(200 \matQ(\thetahat))$. The true values, observations and filtering and prediction distributions are shown in Figure \ref{fig.Moran}. The whole process of inference and prediction took less than two minutes on a single i7-3770 CPU running at $3.40$GHz. Further, after defining $\matQ$, only $10$ lines of $\texttt{R}$ code are required to calculate the log-likelihood, and fewer than this to produce the filtering distribution (see Appendix \ref{app.code}).

\begin{figure}
\begin{center}
  \includegraphics[scale=0.44,angle=0]{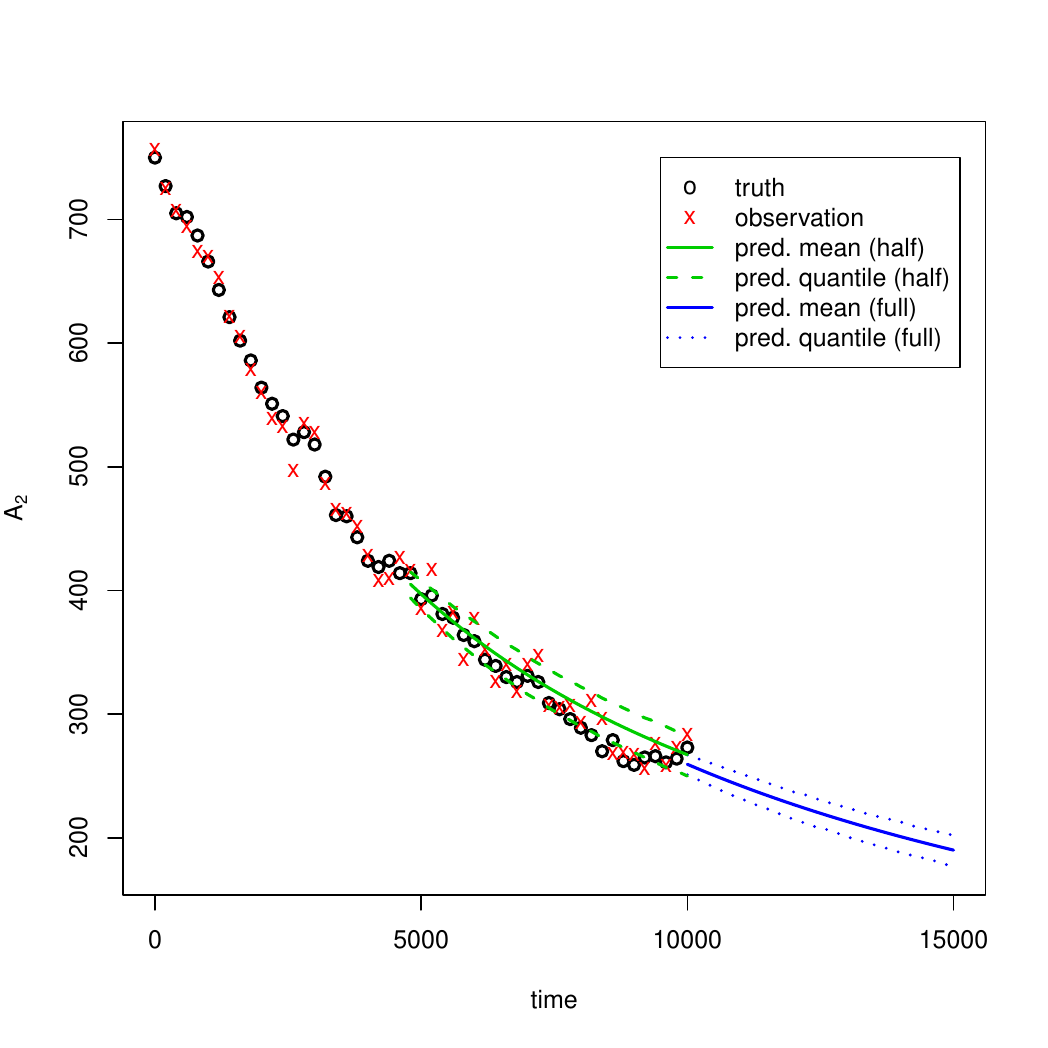}
  \includegraphics[scale=0.44,angle=0]{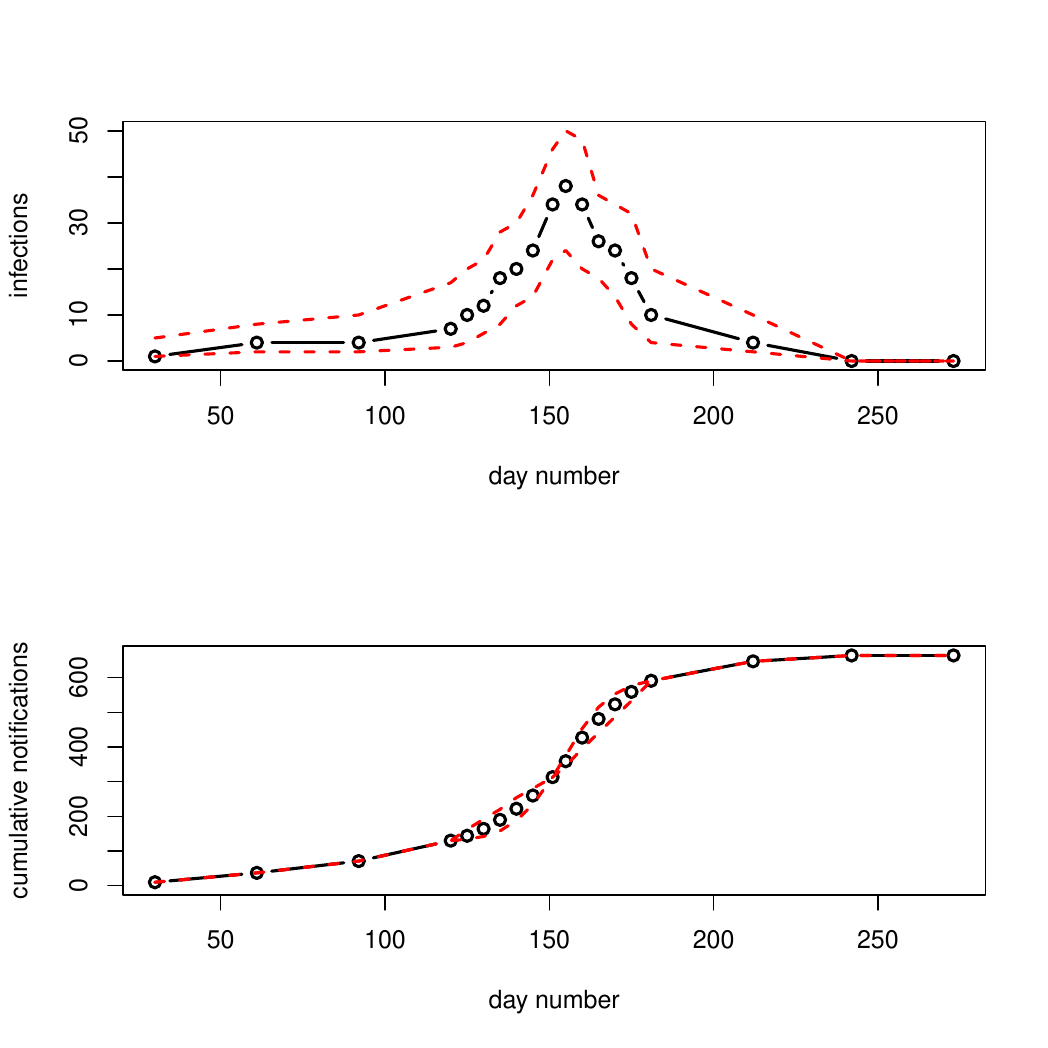}
  \caption{\small{Moran model (left): true values ($\mathsf{o}$), observations ($\mathsf{x}$), filtering/prediction mean (solid lines) and $95\%$ quantiles (dashed and dotted lines) for a further time of $5000$ from data up to $T=5000$ and data up to $T=10000$. Swansea measles SIR model (right): posterior median (solid line, with $\mathsf{o}$ to show the positions) and posterior $95\%$ quantiles for the number of infected people at each real or latent observation time (top) and the cumulative number recovered by that time (bottom).}
\label{fig.Moran}
}
\end{center}
\end{figure}

\subsection{Bayesian inference for the Swansea measles epidemic of 2013}
\label{sec.measles}
The largest measles outbreak in the United Kingdom between 2011 and 2019 centred around Swansea, Wales and occurred between November 2012 and July 2013. Of the $1219$ cases in mid- and west-Wales, $664$ occurred in the Swansea Local Authority (LA) area, $243$ in the nearby Neath and Port Talbot LA and fewer than $80$ occurred in any of the other individual LA areas in West or South Wales (\url{http://www.wales.nhs.uk/sitesplus/888/page/66389}, accessed February 10th 2020). A reduction in uptake of the MMR (Measles, Mumps, Rubella) vaccine has been blamed \cite[e.g.][]{Measles2013A} for this, with particularly low rates reported in Swansea (\url{https://en.wikipedia.org/wiki/2013_Swansea_measles_epidemic}, accessed February 10th 2020). 

The basic reproduction number, $R_0$, is the expected number of secondary infections in a susceptible population that arise directly from the primary infection of a single individual. For the SIR model described in Section \ref{sect.examples}, $R_0=\beta/\gamma$. For measles, $R_0$ is often reported as between $14$ and $18$ \cite[e.g.][]{Measles1982}, which fits with the World Health Organisation (WHO) recommendation of vaccination level of at least $93-95\%$ \cite[]{WHOMeasles}.

We fit the SIR model to the notification data for the Swansea LA provided in Table \ref{table.Swansea} so as to estimate the overall $R_0$ for the partially vaccinated population in Swansea and to demonstrate inference on the unknown number of infectious individuals at each observation time. In fitting the model we are making several assumptions and simplifications, including the following. Firstly, we are ignoring infections from Swansea to other LAs and from these LAs to Swansea; since most of the infections occurred in Swansea the former will outnumber the latter and so we will underestimate the `true' $R_0$, and provide a `local' $R_0$ at the epicentre of the infection. Secondly it is known that the lowest level of vaccination, and the highest level of infection was amongst 10-18 year olds \cite[]{Measles2013B}; a more accurate model would, therefore, partition the population into age groups. Age-stratified, continuous-time Markov chain SIR models are difficult to fit in general, however, and often a deterministic version of the model is used \cite[e.g.][]{BroadfootKeelingMRes}. Finally, we treat a notification as equivalent to a removal: this is not unreasonable as once an individual has been diagnised by a GP with suspected measles they will be asked to isolate themselves.

\begin{table}
    \begin{center}
  \caption{\small{Number of measles notifications in the Swansea Local Authority area by month (from \url{http://www.wales.nhs.uk/sitesplus/888/page/66389}, February 10th 2020). \label{table.Swansea}}}
  
  \begin{tabular}{c|ccc|ccccccc}
    &\multicolumn{3}{c|}{2012}&\multicolumn{7}{c}{2013}\\
Month  &Oct &Nov&Dec & Jan &Feb& Mar&Apr&May&Jun&Jul\\
\hline
  Day number&0&30&61&92&120&151&181&212&242&273\\
  \hline
  Notifications&0&10&27&34&59&183&278&56&17&0
  \end{tabular}
    \end{center}
\end{table}

As described in Section \ref{sec.statespace.red} we add as latent variables the number of infections at each of the reporting times, Days 30, 61, 92, ..., 212. The number of infections at times 242 and 273 must both be zero.

To understand the evolution near the peak of the epidemic and speed up inference still further, we add latent observation times during the peak of the infection, at Days 125, 130, 135, 140, 145, 155, 160, 165, 170 and 175. This leads to $10$ further latent observations of the number of infected individuals and (because of constraints) $10$ further latent observations of the number removed during each reduced time period, leading to a total of $27$ integer latent variables.

We use a $\mathsf{N}(\log 5,2/3)$ prior for $\log R_0=\log (\beta/\gamma)$, a $\mathsf{N}(\log (1/15),1)$ prior for $\log \gamma$ and, because it is very poorly identified, we set the prior for the effective population size to $p(N_{pop}=n)\propto \exp(-n/500)1_{\{n\ge 1000\}}$.

We perform inference via a Metropolis-within-Gibbs algorithm: $\theta=(\log \beta, \log \gamma)$ is updated via a random walk proposal with a jump of $\mathsf{N}(0,\lambda^2 \mathsf{I}_2)$,  $n_{pop}$ via an integer-valued random walk proposal, and $x_{latent}$, the latent observations via integer random walks, with physical constraints (such as the sum of all the $R$s not being able to exceed $n_{pop}$) checked for automatically; see Appendix \ref{sec.SIRextra} for more details.

The basic reproduction number, $R_0$, is estimated as $1.15$, with a $95\%$ credible interval of $(1.01,1.31)$. This fits with other information known: firstly,up until 2013, $R_0$ only changed gradually over time (due to year-on-year variations in infant vaccination rates) and it cannot have reached much higher than $1$ in late 2012 as otherwise there would have been an outbreak in a previous year; secondly an $R_0$ of $1.15$ if the true $R_0$ is $16$, corresponds to a vaccination level of $93\%$, and $R_0=1.3$ corresponds to a $92\%$ level, and as argued earlier, we expect to slightly underestimate $R_0$. As of December 2012, the estimated coverage of one dose of MMR vaccine among 16 year-olds in Wales was $91\%$ \cite[]{COVER105}.

The right-hand panels of Figure \ref{fig.Moran} show the posterior median and $95\%$ credible intervals for the number of infections at each of the monthly observation times and at the 10 additional latent times, and similar intervals for the cumulative number of infections. In any infectious disease, at any current time point, it is vital to understand the current, unknown, number of infections in order to be able to predict the future course of an epidemic.

\section{Discussion}
\label{sec.discuss}
We have shown that inference, prediction and filtering for continuous-time Markov chains with a large but finite statespace, especially those arising from reaction networks is not just conceptually straightforward when the matrix exponential is used, but it is also often practical. We have provided and demonstrated the use of robust tools for this purpose in \texttt{R}, which opens up the direct use of and inference for reaction-network models to a wider audience. Straightforward inference for epidemic models, such as the SIR and SEIR models is particularly apposite at the time of submission, as it might have enabled an analysis of early COVID-19 infection data by people not expert in the more complex MCMC methodology typically used.

We emphasise that we are not suggesting that the tools we provide should replace the particle MCMC, ABC and SMC methods currently employed. In our experience, inference for epidemic models coded in a fast, compiled language is often more efficient in terms of effective samples per second, for example, than the approach using matrix exponentiation. However, the matrix exponential approach \emph{is} much more straightforward, and the code that uses it can be written in the simpler, interpreted language \texttt{R}.

As the size of the statespace increases, the efficiency of the matrix exponentiation approach decreases; however, once the statespace becomes sufficiently large, the evolution of the process is often approximated by a stochastic differential equation \cite[e.g.][]{Golightly05,fearnhead12} or, when the behaviour is effectively deterministic, by ordinary differential equations \cite[e.g.][]{BroadfootKeelingMRes}.

For the scaling and squaring approach, in particular, the cost of the exponentiation of $\matQ/K$ can be nearly halved by using a Pad\'e approximant \cite[e.g.][]{Dubious}, but this then requires a matrix inversion, and so, for reasons of robustness, was not pursued here.

\section*{Acknowledgements}
I would like to thank Prof. Lam Ho for suggesting that the reformulation of the statespace in \cite{HCS2018} in terms of births might be applicable within the methodology presented herein. I am also grateful to Dr. Andrew Golightly for several  useful discussions. 

\bibliographystyle{apalike}
\bibliography{matexp}

\appendix
\section{Evaluating $m_{\epsilon}(\rho)$}
\label{sec.mepsrho}
Our fast, robust and accurate method for evaluating $m_{\epsilon}(\rho)$, as defined in Section \ref{sec.uniformisation} relies on the following new result.
\begin{theorem}
  \label{thrm.asymp}
  If $\rho\le \epsilon$, $m_{\epsilon}(\rho)=0$, and if $\rho\le \epsilon^{1/2}$, $0\le m_{\epsilon}(\rho)\le 1$. More generally: $m_{\epsilon}(\rho)\le \lceil m_+\rceil$, where 
\begin{align}
  \label{eqn.mboundhi}
    m_+:= \rho-\frac{1}{3}\log \epsilon \left\{1+\left(1-\frac{18\rho}{\log \epsilon}\right)^{1/2}\right\}-1.
  \end{align}
Furthermore, 
\[
\lfloor m_-\rfloor
\le
m_{\epsilon}(\rho)
\le
\lceil m_{++}\rceil,
\]
where both inequalities require $\epsilon<0.04$ and the latter also requires $\epsilon<1-e^{-\rho}$ and $B>\log \epsilon$, where  
\begin{align}
  \label{eqn.mboundlo}
  m_-&:=\rho+\{2\rho\}^{1/2}\left\{-\log (\epsilon\sqrt{2\pi})-\frac{3}{2}\log A+\log(A-1)\right\}^{1/2},\\
  \label{eqn.mboundhihi}
  m_{++}&:=\rho+\frac{B-\log \epsilon}{3}\left\{1+\left(1+\frac{18\rho}{B-\log \epsilon}\right)^{1/2}\right\},
\end{align}
\[
A:=2\rho \mathsf{h}\left(\frac{m_+ +1}{\rho}\right)
~~~
\mbox{and}
~~~
B:= -\frac{1}{2}\log 4\pi\rho \mathsf{h}\left(\frac{m_-}{\rho}\right),
\]
and $\mathsf{h}(x)=x-1+x\log x$.
\end{theorem}

\begin{figure}
\begin{center}
  \includegraphics[scale=0.40,angle=0]{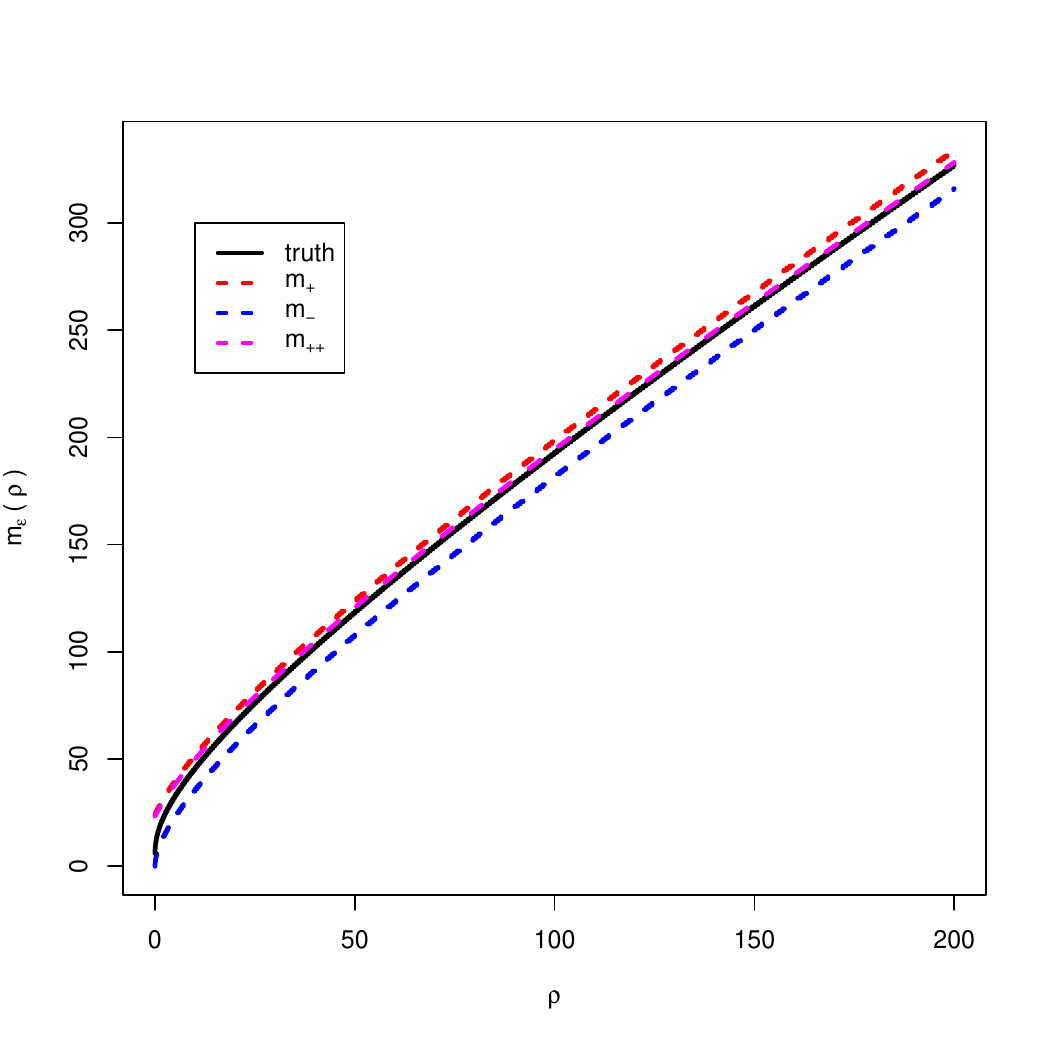}
  \includegraphics[scale=0.40,angle=0]{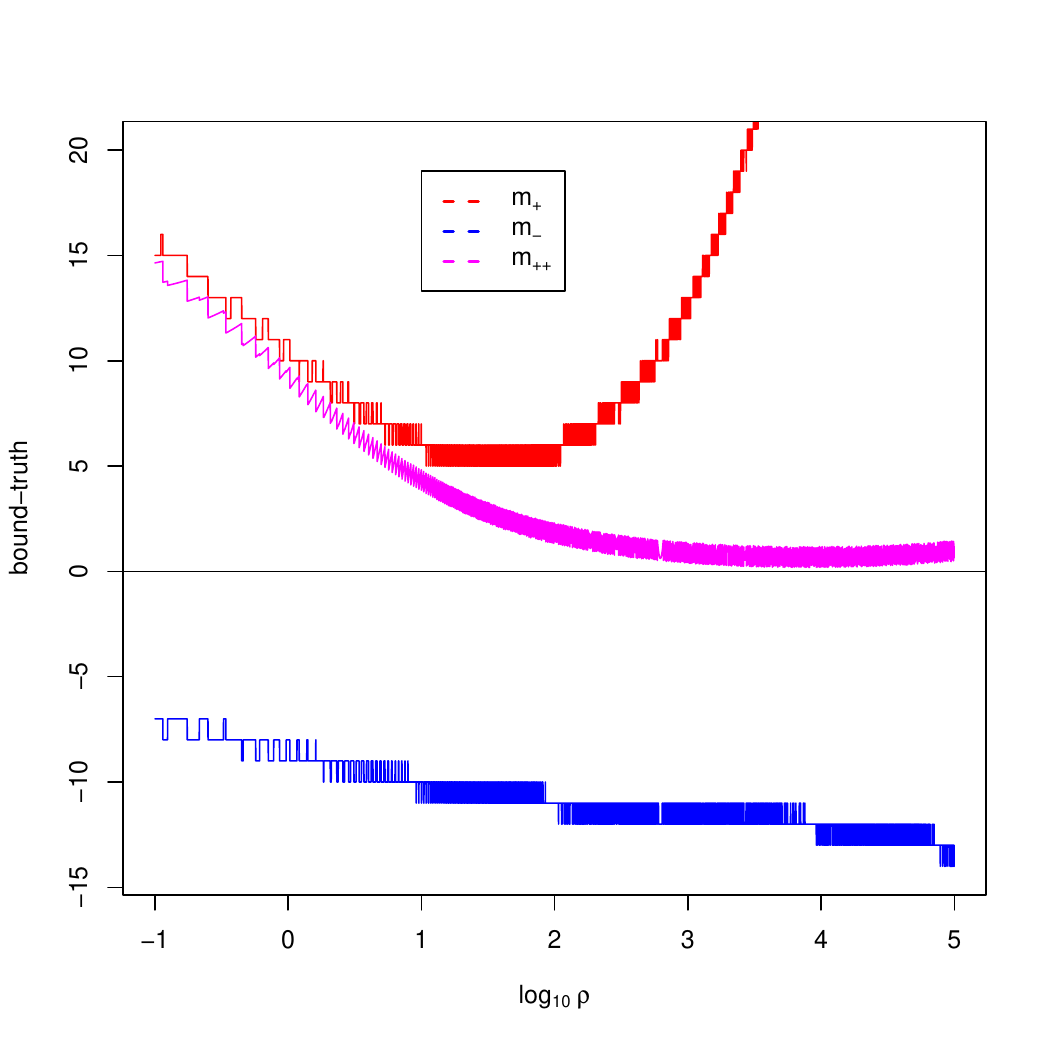}
\caption{Left panel: $m_{\epsilon}(\rho)$ together with its upper and lower bounds from Theorem \ref{thrm.asymp}, plotted against $\rho$ for $\epsilon=10^{-16}$.
 Right panel $\mathsf{bound}(\rho)-m_{\epsilon}(\rho)$ against $\log_{10} \rho$ for $\epsilon=10^{-16}$. \label{fig.m_eps_rho}
}
\end{center}
\end{figure}

The bound \eqref{eqn.mboundhi} arises from a standard argument, whereas those in \eqref{eqn.mboundlo} and \eqref{eqn.mboundhihi} are derived from extremely sharp but intractable bounds on $r_m(\rho):=\Prob{\mathsf{Poisson}(\rho)>m}$ in \cite{Short13}; our bounds use only elementary functions and so are much quicker to compute than the quantile upper bound in \cite{Short13}, yet from Figure \ref{fig.m_eps_rho} they are still very sharp. 
The bounds in \eqref{eqn.mboundlo} and \eqref{eqn.mboundhihi} together with the alternative form in \eqref{eqn.alt.form}
\begin{equation}
\label{eqn.alt.form}
  r_m(\rho)=\frac{1}{\Gamma(m+1)}\int_0^\rho x^{m}e^{-x}\md x,
\end{equation}
which follows from the equivalence between at least $m+1$ events of a Poisson process with a unit rate occuring by time $\rho$ and the time until the $m+1$th event being at most $\rho$, permit a simple but fast binary search for $m_{\epsilon}(\rho)$.

 \subsection{Implementation details}
\label{sect.implement.merho}
Our binary search algorithm homes in on the required $m$ using the upper and lower bounds of Theorem \ref{thrm.asymp} together with the identity
\eqref{eqn.alt.form}, the right hand side of which can be evaluated quickly and accurately using the standard \texttt{C++} toolbox, \texttt{boost}. 
This is quicker than the standard implementation of the Poisson quantile function (e.g. as implemented in \texttt{boost}), which uses the Cornish-Fisher expansion  to approximate the quantile (hence needing an expensive evaluation of $\Phi^{-1}$) and then conducts a local search. 

\subsection{Proof of Theorem \ref{thrm.asymp}}
\label{proof.asymp.bound}
The simple bounds for small $\rho$ arise because $e^{-\rho}>1-\rho$. Hence $r_0(\rho)=1-e^{-\rho}<\rho$ and if $\rho\le\epsilon$, $m_{\epsilon}(\rho)=0$. Furthermore,
$r_1(\rho)=1-e^{-\rho}(1+\rho)<\rho^2$, so if $\rho \le \sqrt{\epsilon}$ then $r_1(\rho)\le \epsilon$, so $m_{\epsilon}(\rho)\le 1$.

The other bounds all use aspects of the following result.
\begin{lemma}
  \label{lemma.h}
  Let $\sfh(x):=1-x+x\log x$, then for $x\ge 1$,
  \[
  \frac{3}{6+2(x-1)}(x-1)^2\le \sfh(x)\le \frac{1}{2}(x-1)^2.
  \]
\end{lemma}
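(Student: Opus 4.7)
My approach is to reduce both inequalities to short Taylor-type arguments centred at $x=1$, exploiting the identities
\[
\sfh(1)=0,\qquad \sfh'(x)=\log x,\qquad \sfh''(x)=1/x,
\]
together with the elementary inequality $\log t\le t-1$ valid for all $t>0$ (equivalently $\log t\ge 1-1/t$, by substituting $t\mapsto 1/t$).

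For the upper bound, I would integrate $\sfh'$ from $1$ to $x$ and apply $\log t\le t-1$ pointwise to obtain
\[
\sfh(x)=\int_1^x\log t\,\md t\;\le\;\int_1^x(t-1)\,\md t\;=\;\tfrac{1}{2}(x-1)^2,
\]
which is the claimed quadratic upper bound with no further work.

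For the lower bound, I would clear the denominator and define $f(x):=2(x+2)\sfh(x)-3(x-1)^2$, so that the claim becomes $f(x)\ge 0$ on $[1,\infty)$. A direct computation gives $f(1)=0$, $f'(1)=2\sfh(1)+2\cdot 3\cdot\log 1-0=0$, and
\[
f''(x)=4\log x+\tfrac{4}{x}-4=4\!\left(\log x+\tfrac{1}{x}-1\right).
\]
The bracketed function $k(x):=\log x+1/x-1$ satisfies $k(1)=0$ and $k'(x)=(x-1)/x^2\ge 0$ on $[1,\infty)$, so $k\ge 0$ there (this is just a rearrangement of $\log x\ge 1-1/x$). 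Hence $f''\ge 0$ on $[1,\infty)$, and integrating twice using $f(1)=f'(1)=0$ yields $f\ge 0$, which is the desired inequality.

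The only step needing care is the algebra for $f''$: one must check that $f(1)$ and $f'(1)$ vanish cleanly and that the second derivative simplifies to the explicitly non-negative form $4(\log x+1/x-1)$. Once that has been verified, both bounds collapse to a single elementary fact, namely $\log t\le t-1$, applied directly for the upper inequality and via its reciprocal form for the lower one, so I do not anticipate any substantive obstacle.
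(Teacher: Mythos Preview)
Your proof is correct. For the upper bound it is essentially the paper's argument: the paper compares second derivatives, noting $\sfh''(x)=1/x\le 1=g''(x)$ with matching values and first derivatives at $x=1$, which after two integrations is exactly your one-line integral of $\log t\le t-1$.

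For the lower bound the approaches differ. The paper simply cites \cite{Boucheron2013}, page~36, without proof. Your route---clearing the denominator to form $f(x)=2(x+2)\sfh(x)-3(x-1)^2$ and showing $f(1)=f'(1)=0$, $f''(x)=4(\log x+1/x-1)\ge 0$ via the reciprocal form of $\log t\le t-1$---is a clean, self-contained argument that removes the external dependence and unifies both inequalities under the single elementary fact $\log t\le t-1$. The paper's citation is shorter to write down; your version is more informative and makes the lemma stand alone.
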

\begin{proof}
  The left hand inequality holds for $x\ge 0$ and is from \cite{Boucheron2013} page 36.
  For the right hand inequality, set $g(x)=(x-1)^2/2$ and notice that
  $0=\sfh(1)=g(1)=\sfh'(1)=g'(1)$, and $\sfh''(x)=1/x\le 1=g''(x)$ for $x\ge 1$.
\end{proof}

The first upper bound on $m_{\epsilon}(\rho)$, \eqref{eqn.mboundhi}, arises from a standard Chernoff argument \cite[e.g.][]{Boucheron2013} to the right tail of a $\mathsf{Poisson}(\rho)$ random variable, $X$. The moment generating function is $M_X(t)=\Expect{e^{Xt}}=\exp[\rho (e^t-1)]$, and by Markov's inequality:
\[
\Prob{X\ge m}=\Prob{e^{Xt}\ge e^{mt}}
\le
e^{-mt}M_{X}(t)
=e^{-mt+\rho (e^t-1)}.
\]
The inequality holds for all $t$ and the right-hand side is minimised at $t=\log(m/\rho)$, giving
\[
\Prob{X\ge m}\le \exp[-\rho \sfh(m/\rho)]\le \exp\left[-\rho\frac{3(m/\rho-1)^2}{6+2(m/\rho-1)}\right]
\]
by Lemma \ref{lemma.h}. Setting $\epsilon=\Prob{X\ge m+1}$ and $y=(m+1)/\rho-1$ gives $3\rho y^2 (6+2y)\log \epsilon\ge 0$, from which
$y\ge -\log \epsilon\times\sqrt{1-18\rho/\log \epsilon}/(3\rho)$,
and \eqref{eqn.mboundhi} follows on substituting for $y$.

The much tighter bounds in \eqref{eqn.mboundlo} and \eqref{eqn.mboundhihi} use
 Theorem 2 of \cite{Short13}, which can be rewritten to state that
\begin{align}
\label{eqn.Short}
  \Phi\left(-\sqrt{2\rho \sfh(m'/\rho)}\right)< \Prob{X> m}<
\Phi\left(-\sqrt{2\rho \sfh(m/\rho)}\right),
\end{align}
where $m':=m+1$ and where the left hand side holds provided $m'>\rho$ and the right hand side holds provided $m>\rho$. We first show that these conditions are satisfied. Firstly, when $\rho<1$, clearly $m'>\rho$, moreover $r_0(\rho)=1-e^{-\rho}$, so provided $1-e^{-\rho}>\epsilon$, we require $m\ge 1>\rho$. When $\rho\ge 1$, we use the easily verified facts that $r_{m}(m)$ is an increasing function of $m$ and $r_{m}(\rho)$ is an increasing function of $\rho$; thus for $\rho \ge m \ge 1$, $r_m(\rho)\ge r_{m}(m)\ge r_1(1)=1-2e^{-1}>0.04$, and the tolerance condition is not satisfied. We, therefore need $m>\rho$ (which also gives $m'>\rho$).

Neither $\Phi^{-1}$ nor $\sfh^{-1}$ is tractable (functions that perform $\Phi^{-1}(p)$ solve $\Phi(x)=p$ iteratively), and even with the bounds on $\sfh$ from Lemma \ref{lemma.h} and standard bounds on $\Phi$ in terms of $\phi$, tractable inversion is still not possible. We use the bound \eqref{eqn.mboundhi} to create \eqref{eqn.mboundlo}, and then \eqref{eqn.mboundlo} to create \eqref{eqn.mboundhihi}.

To prove \eqref{eqn.mboundlo}, since $\epsilon\le 0.04$, from the left inequality in \eqref{eqn.Short},
\[
0.04 \ge \Prob{X\ge m}\Rightarrow \sqrt{2\rho \sfh(m'/\rho)}\ge
-\Phi^{-1}(\epsilon)\approx 1.75> \sqrt{3}.
\]
Firstly, since $m_++1\ge m+1$, this ensures $A>1$, so $\log(A-1)$ is real. More importantly, it ensures that
 $[2\rho \sfh(m'/\rho)]^{-1/2}-[2\rho \sfh(m'/\rho)]^{-3/2}$ is a decreasing function of $[2\rho \sfh(m'/\rho)]^{1/2}$ and, since $\sfh'(x)>0$ for $x>1$, it is also a decreasing function of $m'$.
The $m'$ that we desire satisfies $m'\le m_++1=:m_+'$, and hence 
\[
  [2\rho h(m'/\rho)]^{-1/2}-[2\rho h(m'/\rho)]^{-3/2}
  \ge
  [2\rho h(m_+'/\rho)]^{-1/2}-[2\rho h(m_+'/\rho)]^{-3/2}.
\]
Since, for $y>0$, $\Phi(-y)>(1/y-1/y^3) \phi(y)$,
\[
\Phi\left(-\sqrt{2\rho h(m'/\rho)}\right)
\ge
   \left\{[2\rho h(m_+'/\rho)]^{-1/2}-[2\rho h(m_+'/\rho)]^{-3/2}\right\}
   \phi\left(\sqrt{2\rho h(m'/\rho)}\right).
\]
Combining the left inequality in \eqref{eqn.Short} with the right-hand inequality in Lemma \ref{lemma.h} gives
\[
\epsilon \ge
   \frac{1}{\sqrt{2\pi}}
   \left\{[2\rho h(m_+'/\rho)]^{-1/2}-[2\rho h(m_+'/\rho)]^{-3/2}\right\}
\exp\left[-\frac{(m'-\rho)^2}{2\rho}\right].   
\]
Equation \eqref{eqn.mboundlo} follows on rearrangement.

To show \eqref{eqn.mboundhihi} we apply the right hand inequality in \eqref{eqn.Short} and the bound $\Phi(-x)< \phi(x)/x$, then the fact that
$m\ge m_-$, and finally Lemma \ref{lemma.h} to find:
\begin{align*}
  \Prob{X> m}&<
  \frac{1}{\{4\pi\rho \sfh(m/\rho)\}^{1/2}}
\exp\left[-\rho\sfh(m/\rho)\right]
  \le
  \frac{1}{\{4\pi\rho \sfh(m_-/\rho)\}^{1/2}}
  \exp\left[-\rho\sfh(m/\rho)\right] \\
  &\le
  \frac{1}{\{4\pi\rho \sfh(m_-/\rho)\}^{1/2}}
  \exp\left[-3\rho \frac{(x-1)^2}{6+2(x-1)}\right],  
\end{align*}
where $x=m/\rho$. We must, therefore, ensure that the final bound is no more than $\epsilon$. Rearranging this gives 
$3\rho(x-1)^2-2(B-\log \epsilon)(x-1)-6(B-\log \epsilon)\le 0$, so that when $B-\log \epsilon>0$, $x-1\le (B-\log \epsilon)(1+\sqrt{1+18\rho/(B-\epsilon)})/(3\rho)$.

\section{Functions in the \texttt{expQ} package}
\label{sec.expQ}
The functions in the \texttt{expQ} package are provided below. Each function requires a rate matrix, $Q$, which can be sparse or dense, and a precision, $\epsilon$.\\
\texttt{Unif\_v\_exp\_Q} takes a horizontal vector, $v$, and calculates $v\exp \matQ$ via uniformisation.\\
\texttt{SS\_v\_exp\_Q} takes a horizontal vector, $v$, and calculates $v\exp \matQ$ via scaling and squaring.\\
\texttt{v\_exp\_Q} takes a horizontal vector, $v$, and calculates $v\exp \matQ$ via whichever is likely (based on empirical results on an i7-3770 CPU) to be the more efficient of uniformisation or scaling and squaring.\\
\texttt{vT\_exp\_Q} takes a vertical vector, $v$, and calculates $\left(v^\top\exp \matQ\right)^\top$ via whichever is likely (based on empirical results on an i7-3770 CPU) to be the more efficient of uniformisation or scaling and squaring.\\
\texttt{SS\_exp\_Q} calculates $\exp \matQ$ using scaling and squaring.

\section{Latent-variable updates for the SIR model}
\label{sec.SIRextra}
Our particular reduced-statespace implementation of the SIR model fit for the Swansea Measles epidemic uses $10$ additional latent observation times, $5$ between days $120$ and $151$ (at days $125,130,135,140$ and $145$) and five between days $151$ and $212$ (at days $155,160,165,170$ and $175$). This leads to $27$ latent variables: $17$ unknown number of infecteds at the (true and latent) observation times and $10$ (not $12$ because two sums are known) unknown numbers of recovered for the time period since the previous (true or latent) observation time. We emphasise that the $R$ latent variables are \emph{not} the cumulative number of recovered individuals since the epidemic began.

When a new latent vector is proposed, we first check whether it can possibly fit with the current $n_{pop}$ and the known data. If it does not fit, then the proposal may be rejected without any matrix exponentiation. At the $j$th (true or latent) time point, denote the current number of infecteds by $I_j$ and the number removed since the previous time point by $R_j$. Let $J$ be the total number of (true and latent) time points. Note that $S_j=n_{pop}-I_0-I_j-\sum_{i=1}^j R_i$. The following checks are performed:
\begin{enumerate}
\item For each $j=1,\dots,J$: $I_j\ge 0$, $R_j \ge 0$ and $S_j\ge 0$.
\item For each $j=1,\dots,J$: $S_{j}\le S_{j-1}$.
\item For each $j=1,\dots,J-1$: $I_j\le \sum_{i=j+1}^JR_j$.
\end{enumerate}

These constraints can hinder the mixing of the integer-valued random walk algorithm on the latent variables, so we split the latent variables into four groups, grouped by observation time. This grouping has the additional advantage that only a subset of matrix exponentiation calculations need be performed for each of the four individual proposals.

\section{Log-likelihood \texttt{R} code for the Moran model}
\label{app.code}
To indicate the simplicity of inference via the matrix exponential, we provide code to evaluate the log-likelihood for the Moran model. Code for the filtering distribution is very similar but there is no need to track the re-normalisation constant (in $\texttt{ll}$).

\begin{verbatim}
## Log likelihood for Moran model
## thetaunk=(log alpha, log beta, logit u, logit v)
## npop=known population size
## obstim=vector of observation times
## yobs=vector of observations
## errn=parameter for Binom(2*errn,0.5)-errn error distribution
## nu=t(rep(1/d,d)); ## uniform prior over statespace 
getll<-function(thetaunk,npop,obstim,yobs,errn,nu) {
    thetas=c(exp(thetaunk[1:2]),exp(thetaunk[3:4])/(1+exp(thetaunk[3:4])),npop)
    nobs=length(obstim)
    d=npop+1 ## size of statespace; states are 0, ..., npop

    Q=MoranGetQ(thetas) ## same Q every time as whole statespace
    ll=0
    nu=nu*dbinom(yobs[1]+errn-(0:npop),2*errn,0.5)
    for (i in 2:(nobs-1)) {
        currtot=sum(nu)
        if ((currtot<1e-6)||(currtot>1e6)) { ## avert possible over/underflow
            nu=nu/currtot
            ll=ll+log(currtot)
        }
        nu=Unif_v_exp_Q(nu,Q*(obstim[i+1]-obstim[i]),1e-15)
        nu=t(as.vector(nu)*dbinom(yobs[i+1]+errn-(0:npop),2*errn,0.5))
    }
    ll=ll+log(sum(nu))

    return(ll) 
}
\end{verbatim}
\end{document}